\newtheorem{theorem}{Theorem}[section]
\newcommand{\nouveautxt}[1]{#1}
\def\BibTeX{{\rm B\kern-.05em{\sc i\kern-.025em b}\kern-.08em
    T\kern-.1667em\lower.7ex\hbox{E}\kern-.125emX}}
\begin{document}
\ifpdf
\DeclareGraphicsExtensions{.pdf}
\else
\DeclareGraphicsExtensions{.eps}
\fi

%
\title{Superimposed Frame Synchronization Optimization \\for Finite Blocklength Regime}

\author{\IEEEauthorblockN{Alex The Phuong Nguyen}
\IEEEauthorblockA{\textit{IMT Atlantique, Lab-STICC, UBL} \\
29238 Brest, France \\
thephuong.nguyen@imt-atlantique.fr}
\and
\IEEEauthorblockN{Rapha\"el Le Bidan}
\IEEEauthorblockA{\textit{IMT Atlantique, Lab-STICC, UBL} \\
29238 Brest, France\\
raphael.lebidan@imt-atlantique.fr}
\and
\IEEEauthorblockN{Fr\'ed\'eric Guilloud}
\IEEEauthorblockA{\textit{IMT Atlantique, Lab-STICC, UBL} \\
29238 Brest, France\\
frederic.guilloud@imt-atlantique.fr}
}

\maketitle

\begin{abstract}
Considering a short frame length, which is typical in Ultra-Reliable Low-Latency and massive Machine Type Communications, a trade-off exists between improving the performance of frame synchronization (FS) and improving the performance of information throughput. In this paper, we consider the case of continuous transmission over AWGN channels where the synchronization sequence is superimposed to the data symbols, as opposed to being added as a frame header. The advantage of this superposition is that the synchronization length is as long as the frame length. On the other hand, its power has to be traded-off not to degrade the code performance. We first provide the analysis of FS error probability using an approximation of the probability distribution of the overall received signal. Numerical evaluations show the tightness of our analytic results. Then we optimize the fraction of power allocated to the superimposed synchronization sequence in order to maximize the probability of receiving a frame without synchronization errors nor decoding errors. Comparison of the theoretical model predictions to a practical setup show very close optimal power allocation policies. 
\end{abstract}

\begin{IEEEkeywords}
Frame synchronization, preamble design, short-packet transmission, finite blocklength, mMTC, URLLC, 5G.
\end{IEEEkeywords}

\section{Introduction}
Short packet transmissions are fundamental bricks to build future applications such as intelligent houses, connected vehicles, automated factories and Internet of Things. The physical layer design rules for short packet transmissions are quite different from the classic asymptotic counterpart where blocklengths are assumed asymptotically long. For example, inserting pilots for channel estimation becomes surprisingly expensive~\cite{ostman2017finite}, capacity achieving codes like LDPC and turbo code do not perform as well as their lengths are reduced~\cite{3gppfblcode2016} and the widely-used ergodic and outage capacities are overshooting~\cite{durisi2016short}. As a result, new finite blocklength (FBL) design rules are required.
Recently, the bounds of maximal channel coding rate for the FBL regime, which are the peers of asymptotic capacity metrics, were derived in~\cite{polyanskiy2010channel,erseghe2016coding}. Nevertheless, successful frame synchronization (FS) is also required prior to decoding to achieve the promised throughput. FS can be achieved by blind estimators~\cite{imad2009blind}. However, for complexity issues, the use of sequences of known symbols denoted synchronization words (SW) are preferred. SW can be concatenated to the information symbols by means of a frame header~\cite{chiani2007analysis}. When the length of frame is fixed, including a header for FS reduces both spectral efficiency and the coding length. Conversely, reducing the FS length will lower the FS performance. Hence there exists a trade-off to be found to optimize the chance of receiving a frame without errors. In the context of low latency communications and/or massive connectivity, the frame length can be reduced while keeping a maximal FS length (i.e. the length of the entire frame) by superimposing SW to data symbols~\cite{wang2007novel}. Then the optimization is related to the power of the FS symbols.

FS is a well investigated topic. FS methods can be roughly divided into two groups according to two distinct transmission modes. The first one is \emph{burst transmissions} for which FS is performed thanks to binary hypothesis testing. Binary hypothesis testing requires a threshold that is usually derived from the Neyman-Pearson lemma. Optimum metrics, in terms of minimizing FS error probability, for binary modulation over AWGN channels are studied in~\cite{chiani2005practical,chiani2006sequential} and~\cite{suwansantisuk2008frame}. The approximation of optimum metric for M-PSK modulation with phase offset error over AWGN channels is provided by~\cite{elzanaty2017frame}. In the finite blocklength regime, a study of FS for burst transmission was recently proposed in~\cite{bana2018short}.

The second mode is \emph{continuous transmissions} where frames are transmitted successively. The knowledge of frame length enables Maximum Likelihood (ML) synchronization: a metric to be maximized is computed for all possible SW positions. The synchronization metric to be used depends on channel models and data distributions. For example, the correlation metric minimizes the FS error probability for binary symmetric channels (BSC)~\cite{barker1953group}, whereas for AWGN channels with binary modulation, the optimal metric requires an additional energy-correction term~\cite{massey1972optimum}. As an extension of~\cite{massey1972optimum}, the authors of~\cite{lui1987frame} present ML metrics for general M-ary phase-coherent and phase-noncoherent AWGN channels. For Rayleigh fading models, the optimum metrics for non-coherent detection is provided in~\cite{jia2005frame}. Regarding the analytic performance (the probability of erroneous frame synchronization), little work has been proposed so far. The work from~\cite{chiani2007analysis} is one of a few that we are aware of, and provides both metrics and analytic performance results for the particular case of binary modulations over AWGN channels.

In this paper, we provide an analysis of FS error probability for continuous transmissions with superimposed SW over AWGN channels. 
In the considered setup FS is performed once for the whole transmission rather than repeated on each frame. 
We will assume hereafter that the coded data symbols follow a spherical uniform distribution. Resorting to Gaussian approximation leads us to a simple expression that can be used to optimize the trade-off between FS overhead and FBL decoding by using the results of~\cite{erseghe2016coding}. Numerical evaluations show the tightness of our analytic results. Finally, this theoretical model is compared to a practical setup to show that their optimal power allocations are very close.
A similar issue has been tackled in~\cite{bana2018short} but with a different setup. Indeed, the authors of~\cite{bana2018short} assume \emph{burst transmissions} thus implementing binary hypothesis testing detection. On the contrary, our setup assumes \emph{continuous transmissions} for which the derivation of optimal thresholds is not required,
leading to different and complementary calculations and results.

The paper is organized as follows.
The system model and the optimization target are described in Section~\ref{sect:system_model}.
The analytic FS error probability is derived in Section~\ref{sect:fs_analysis} where an approximation of the received data distribution is proposed and justified.
Applications to the optimization of the synchronization word power, numerical evaluations and comparisons to a practical setup are reported in Section~\ref{sect:numerical_eval}.
Finally, conclusions and perspectives can be found in Section~\ref{sect:conclusion}.

Throughout the paper, random variables are denoted by upper case letters
 whereas their realizations are written in lower case, and boldface letters denote column-wise vectors and matrices. Hence, $X$ and $x$ stand for a random scalar variable and its realization respectively, whereas $\mathbf{X}$ and $\mathbf{x}$ are used to denote a random vector and its deterministic version.
Notation $[\mathbf{a};\mathbf{b}]$ denotes the vertical concatenation of two vectors $\mathbf{a}$ and $\mathbf{b}$.
Superscripts $^T$ and $^H$ stand for transpose and Hermitian transpose, respectively.
$\norm{.}$ is used to denote the $L_2$-norm and $\Re{x}$ is the real part of complex $x$.
We adopt conventional notations $\mathbf{0}$ and $\mathbf{I}$ for a zero matrix and an identity matrix of whatever dimension needed to make sense of the expressions being computed. To be more explicit, $\mathbf{0}_n$ and $\mathbf{I}_n$ can be used to clarify the dimensions $n \times n$ of these particular matrices.
The Delta function, the Gamma function, the tail distribution function of the standard Normal distribution and the $n$-order modified Bessel function of the first kind are denoted by $\delta(.)$, $\Gamma(.)$, $\mathcal{Q}(.)$ and $\mathcal{I}_n(.)$, respectively.
\section{System model and Target}
\label{sect:system_model}

\subsection{Frame structure}
\begin{figure}
	\centering
	\includegraphics[width=\linewidth]{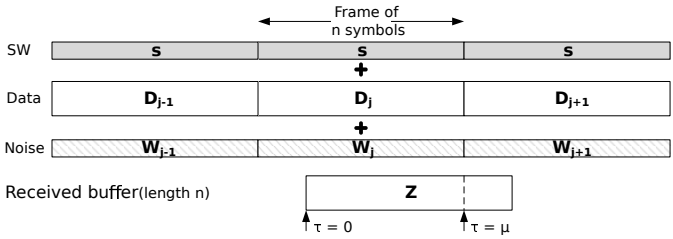}
	\caption{System model (from the $(j-1)$-th frame to the $(j+1)$-th frame). The frame beginning is at position $\tau=\mu$ in buffer $\mathbf{Z}$.}
	\label{fig:system_model}
\end{figure}

The frame is made of coded data $\mathbf{D} \in \mathbb{C}^n$ superimposed with a fixed known synchronization word (SW) $\mathbf{s} \in \mathbb{C}^n$ where $n$ is the number of symbol (channel-use) per frame.
Frames are assumed to be transmitted continuously over an AWGN channel. The receiver is synchronized at the symbol level. By buffering $n$ successive symbols, the frame beginning is surely included in the buffer which is denoted by $\mathbf{Z}$ (see \figurename~\ref{fig:system_model}). \nouveautxt{Note that although each frame includes a SW, the receiver only needs to synchronize once when starting the reception of the data stream. So the analysis hereafter is based on the probability of failure of this initial  synchronization step.}

The coded data $\mathbf{D}$ is assumed to follow the spherical uniform probability distribution with radius $\sqrt{n\rho}$: codewords $\mathbf{D} \in \mathbb{C}^n$ are uniformly distributed on the surface of a complex sphere and $\norm{\mathbf{D}}^2 = n\rho$. 
This assumption follows from Shannon's achievability proof of the AWGN channel capacity theorem that optimal codes for complex AWGN channels in the asymptotic regime consist of dense packing of signal points within a sphere.

The power of noise $\mathbf{W}$ is normalized to unity and the average energy per symbol of $\mathbf{s}$, $\mathbf{D}$ and of the superposition of the two over the entire frame are denoted by $\rho_s$, $\rho$ and $\rho_{\mbox{\scriptsize tot}}$ respectively. Hence $\norm{\mathbf{s}}^2 = n\rho_s$ with $\rho_{\mbox{\scriptsize tot}}=\rho_s+\rho$ and we define the \emph{overhead} ratio between the average symbol energy of SW and the average symbol energy of the entire frame as
\begin{align}
\alpha \triangleq \frac{\rho_s}{\rho_{\mbox{\scriptsize tot}}}. 
\end{align}
Intuitively, if $\alpha$ is increased, the frame beginning will be easier to locate but less power for data will lower the coding performance, and vice versa.

As illustrated in \figurename~\ref{fig:system_model}, if frame beginning is at position $\mu$ of buffer $\mathbf{Z}$, then $\mathbf{Z} = \mathbf{s}_\mu + \mathbf{D}_z + \mathbf{W}_z$ where $\mathbf{s}_\mu$ denotes the right circular shift by $\mu$ positions of $\mathbf{s}$, $\mathbf{D}_z$ is the concatenation of a part of frame-$(j)$-data $\mathbf{D}_j$ and a part of frame-$(j+1)$-data $\mathbf{D}_{j+1}$, $\mathbf{W}_z$ is an i.i.d. Gaussian noise vector. Hence $\mathbf{Z}$ consists of the deterministic $\mathbf{s}_\mu$ and the random part $\mathbf{Y}_z \triangleq \mathbf{D}_z + \mathbf{W}_z$.

The frame synchronization can be modeled as the optimization problem
$\hat{\tau} = \underset{0\leq \tau < n}{\textrm{argmax }} f(\mathbf{z},\tau)$
where $\tau$ is the tested position of the frame beginning within the received vector $\mathbf{z}$ and $f(\mathbf{z},\tau)$ is a \emph{decision metric}.
If the SW is at position $0 \leq \mu < n$, with $f(\mathbf{Z},\tau)$ denoting \emph{decision-metric random variables}, the probability $P_e$ of erroneous synchronization can be written as
$P_e = \textrm{Pr} \left\lbrace \hat{\tau} \neq \mu \right\rbrace = \textrm{Pr} \left\lbrace \bigcup\limits_{\tau \neq \mu} f(\mathbf{Z},\mu) \leq f(\mathbf{Z},\tau) \right\rbrace$
which can be upper-bounded by
\begin{equation}
\label{eq:union_bound}
P_e \leq P_{e,u} \triangleq \sum_{\tau \neq \mu} \mathcal{P}_e(\tau),
\end{equation}
where $\mathcal{P}_e(\tau) \triangleq \textrm{Pr} \left\lbrace f(\mathbf{Z},\mu) \leq f(\mathbf{Z},\tau) \right\rbrace$.
In this paper, we focus on the widely-used
correlation rule $f(\mathbf{z},\tau) = \Re{\mathbf{s}_\tau^H \mathbf{z}}$~\cite[Chapter 3, Page 69]{ling:2017} where the channel phase is assumed to be well estimated and compensated.

\subsection{Optimization target}

The objective is to find the optimum value of the energy allocation ratio (the overhead) $\alpha$ for a given fixed total symbol energy $\rho_{\mbox{\scriptsize tot}}$. We optimize $\alpha$ so as to minimize the frame error rate $P_f$ defined by
\begin{align}
\label{eq:fer_model}
    P_f \triangleq 1-(1-P_e)(1-\epsilon^\star)
\end{align}
where $P_e$ is the probability of erroneous synchronization and where $\epsilon^\star$ denotes the minimal channel-coding packet error probability that one can obtain with the best channel code/decoder pair for a finite-length-$n$ packet of $k$ information bits at SNR $\rho$~\cite{polyanskiy2010channel}. Rigorously speaking, $P_f$ is an upper-bound on the true frame error rate since it disregards the possibility that, at least in principle, a frame could be correctly decoded even though it has not been correctly synchronized.

Under the spherical uniform distribution assumption, $\epsilon^\star$ can be approximated by
\begin{align}
\label{eq:epsilon_D}
\epsilon^\star(n,k,\rho) \approx \mathcal{Q}\left(\frac{nC(\rho) - k+\frac{1}{2}\log_2(2n)}{\sqrt{nV(\rho)}}\right)    
\end{align}
where $C(\rho) = \log_2(1+\rho)$ and $V(\rho) = \log_2^2(e)\frac{\rho(\rho+2)}{(\rho+1)^2}$ for the \nouveautxt{\emph{complex}} continuous-input AWGN channel~\cite{erseghe2016coding}. \nouveautxt{One could alternatively use any other achievability bounds of \cite{polyanskiy2010channel}, e.g. RCU and $\kappa \beta$, to bound $\epsilon^\star$ in \eqref{eq:fer_model}.}

Replacing $P_e$ by its union bound~\eqref{eq:union_bound} results in the following upper bound $P_{f,u}$ on frame error rate
\begin{align}
\label{eq:FBL_optim_Pf}
P_{f,u} \triangleq 1-(1-P_{e,u})(1-\epsilon^\star).
\end{align}
The optimization target becomes finding the overhead $\alpha$ that minimizes $P_{f,u}$:
\begin{align}
    \hat{\alpha} = \textrm{argmin}_{0 \le \alpha \le 1} P_{f,u}.
\end{align}
The evaluation of $P_{f,u}$ is the subject of the next Section.
\section{FS error probability analysis}
\label{sect:fs_analysis}
As in~\eqref{eq:union_bound}, the distributions of $\mathbf{Z}$ and $\mathbf{Y}_z$ are required for FS error probability analysis. Since the distribution of $\mathbf{Y}_z$ is quite involved, we propose hereafter an approximation of this distribution which will be used further to obtain an analytic expression of FS error probability.

In Section~\ref{sect:Y_pdf}, we first analyze the distribution of $\mathbf{Y} = \mathbf{D} + \mathbf{W}$ (i.e. assuming $\mu=0$) where $\mathbf{D} \in \mathbb{R}^n$ follows a uniform distribution on the surface of a sphere of radius $\sqrt{n\rho}$ and $\mathbf{W} \sim \mathcal{N}(\mathbf{0},\mathbf{I}_n)$.
An approximation of the distribution of $\mathbf{Y}$ that simplifies the FS analysis is given in Section~\ref{sect:approx_KL_optim_real}.
The choice of the real space is simply for illustrative purpose. Section~\ref{sect:complex_space_result} provides the extension to the complex space that is used to obtain final results on FS error probability in Section~\ref{sect:FS_result}. 

\subsection{The probability density function (PDF) of $\mathbf{Y}$}
\label{sect:Y_pdf}

\begin{theorem}[Distribution of $\mathbf{Y}$ for real space]
\label{Theorem:dist_real_Yd}
Let $\mathbf{D} \in \mathbb{R}^n$ follow uniform distribution on the surface of a sphere of radius $\sqrt{n\rho}$ and $\mathbf{W} \sim \mathcal{N}(\mathbf{0},\mathbf{I}_n)$, then $\mathbf{Y} = \mathbf{D}+\mathbf{W}$ has PDF
\begin{align}
\begin{split}
p_{\mathbf{Y}}(\mathbf{y}) = &\frac{\Gamma(n/2)}{2\pi^{n/2}} (n\rho)^{1/2-n/4} \norm{\mathbf{y}}^{1-n/2} e^{-(\norm{\mathbf{y}}^2 + n\rho)/2} \\
&\times \mathcal{I}_{n/2-1}(\norm{\mathbf{y}}\sqrt{n\rho}).
\end{split}
\end{align}
\end{theorem}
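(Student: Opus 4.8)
The plan is to compute $p_{\mathbf{Y}}$ directly as the convolution of the uniform surface measure on the sphere with the Gaussian density of $\mathbf{W}$, and then to collapse the resulting spherical integral into a one-dimensional integral that can be recognized as an integral representation of the modified Bessel function $\mathcal{I}_{n/2-1}$.

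First I would write $p_{\mathbf{Y}}(\mathbf{y}) = (2\pi)^{-n/2}\int_{\norm{\mathbf{d}}=\sqrt{n\rho}} e^{-\norm{\mathbf{y}-\mathbf{d}}^2/2}\,d\sigma(\mathbf{d})$, where $d\sigma$ is the uniform probability measure on the sphere of radius $\sqrt{n\rho}$. Expanding $\norm{\mathbf{y}-\mathbf{d}}^2 = \norm{\mathbf{y}}^2 - 2\,\mathbf{y}^T\mathbf{d} + n\rho$ and using $\norm{\mathbf{d}}^2 = n\rho$ on the sphere lets me pull the factor $e^{-(\norm{\mathbf{y}}^2+n\rho)/2}$ outside, leaving only $\int e^{\mathbf{y}^T\mathbf{d}}\,d\sigma(\mathbf{d})$ to evaluate.

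By rotational invariance this remaining integral depends on $\mathbf{y}$ only through $r=\norm{\mathbf{y}}$, so I may set $\mathbf{y}=r\mathbf{e}_1$ and write $\mathbf{y}^T\mathbf{d}=r\sqrt{n\rho}\,t$ with $t=\cos\theta$ the cosine of the polar angle. The pushforward of the uniform measure onto $t\in[-1,1]$ has the standard density proportional to $(1-t^2)^{(n-3)/2}$ with normalizing constant $\Gamma(n/2)/(\sqrt{\pi}\,\Gamma((n-1)/2))$, which reduces the problem to the scalar integral $\frac{\Gamma(n/2)}{\sqrt{\pi}\,\Gamma((n-1)/2)}\int_{-1}^{1}e^{r\sqrt{n\rho}\,t}(1-t^2)^{(n-3)/2}\,dt$.

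The crux is then to match this against the Poisson-type representation $\mathcal{I}_\nu(z)=\frac{(z/2)^\nu}{\sqrt{\pi}\,\Gamma(\nu+1/2)}\int_{-1}^{1}e^{zt}(1-t^2)^{\nu-1/2}\,dt$. Choosing $\nu=n/2-1$ (so that $\nu-1/2=(n-3)/2$) and $z=r\sqrt{n\rho}$, the $\sqrt{\pi}$ and $\Gamma((n-1)/2)$ factors cancel, yielding $\int e^{\mathbf{y}^T\mathbf{d}}\,d\sigma(\mathbf{d}) = \Gamma(n/2)\,(z/2)^{-(n/2-1)}\mathcal{I}_{n/2-1}(z)$. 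Substituting back, expanding $(z/2)^{-(n/2-1)}=2^{n/2-1}\norm{\mathbf{y}}^{1-n/2}(n\rho)^{1/2-n/4}$, and combining $2^{n/2-1}(2\pi)^{-n/2}=\tfrac{1}{2\pi^{n/2}}$ produces the claimed density. I expect the only delicate point to be the bookkeeping of the Gamma-function and power-of-two constants so that the prefactor collapses exactly; the structural steps (convolution, reduction to the radial variable, Bessel representation) are routine.
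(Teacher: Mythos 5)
Your proof is correct --- I checked the constant bookkeeping and it does collapse exactly as you claim: the pushforward density of $t$ is $\tfrac{\Gamma(n/2)}{\sqrt{\pi}\,\Gamma((n-1)/2)}(1-t^2)^{(n-3)/2}$, the Poisson representation with $\nu=n/2-1$, $z=\norm{\mathbf{y}}\sqrt{n\rho}$ cancels the $\sqrt{\pi}\,\Gamma((n-1)/2)$ factors, and $2^{n/2-1}(2\pi)^{-n/2}=\tfrac{1}{2\pi^{n/2}}$ gives the stated prefactor. However, your route is genuinely different from the paper's. The paper never writes the convolution integral: it first uses rotational invariance of the pair $(\mathbf{D},\mathbf{W})$ to reduce to the fixed point $\mathbf{d}_0=[\sqrt{n\rho};0;\dots;0]$, concludes that $\norm{\mathbf{Y}}^2$ follows the noncentral law $\chi^2_n(n\rho)$, and then argues that, conditioned on $\norm{\mathbf{Y}}=r$, the vector $\mathbf{Y}$ is uniform on the sphere of radius $r$, so that $p_{\mathbf{Y}}(\mathbf{y})=p_{\norm{\mathbf{Y}}}(\norm{\mathbf{y}})/\mathcal{S}(\norm{\mathbf{y}})$ with $\mathcal{S}(r)=\tfrac{2\pi^{n/2}}{\Gamma(n/2)}r^{n-1}$; the Bessel function then enters ``for free'' through the known noncentral chi density rather than through an integral representation. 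The trade-off: the paper's symmetry argument is shorter if one accepts the noncentral $\chi^2$ density as known, and it yields the law of $\norm{\mathbf{Y}}^2$ as a standalone fact that is reused later (in the proof of Theorem~\ref{Theorem:approx_dist_real_Yd}, to get $\mathbb{E}[\norm{\mathbf{Y}}^2]=n+n\rho$); your convolution-plus-Poisson-representation argument is more self-contained, needing only the Bessel integral identity, and it makes the dependence on the Gaussian kernel explicit, which would generalize more readily to non-Gaussian noise densities. One minor caveat: your angular-density step and the Poisson representation with $\nu-1/2=(n-3)/2$ implicitly require $n\ge 2$, which is harmless here but worth stating.
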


\begin{proof}[Proof]
We have $\norm{\mathbf{Y}}^2 = n\rho + \norm{\mathbf{W}}^2 + \mathbf{D}^H \mathbf{W} + \mathbf{W}^H \mathbf{D}$, then $p_{\mathbf{Y}}(\mathbf{y}) = p_{\mathbf{D},\mathbf{W}}(\mathbf{d},\mathbf{w})$.
Because of radial symmetry, the PDF of $\norm{\mathbf{Y}}^2$ is independent of the rotation applied on $\mathbf{D}$.
Indeed, by applying any unitary matrix $\mathbf{v} \in \mathbb{R}^{n\times n}$ on the data $\mathbf{D}$, the corresponding received random vector has the following square norm: $\norm{\mathbf{vD}+\mathbf{W}}^2 = n\rho + \norm{\mathbf{W}}^2 + \mathbf{D}^T \mathbf{v}^T \mathbf{W} + \mathbf{W}^T \mathbf{v}\mathbf{D}$. Then $p_{\mathbf{D},\mathbf{W}}(\mathbf{v}\mathbf{d},\mathbf{w}) = p_{\mathbf{D},\mathbf{W}}(\mathbf{d},\mathbf{v}^T \mathbf{w}) = p_{\mathbf{D},\mathbf{W}}(\mathbf{d},\mathbf{w})$. As a consequence, we can choose \nouveautxt{$\mathbf{d} = \mathbf{d}_0 \triangleq [\sqrt{n\rho};0;...;0]$} to show that $\norm{\mathbf{Y}}^2$ is distributed according to the non-central $\chi^2$ PDF with $n$ degrees of freedom and a non-centrality parameter of $n\rho$, $\norm{\mathbf{Y}}^2 \sim \chi_{n}^2(n\rho)$. Hence $\norm{\mathbf{Y}}$ follows a non-central Chi distribution $\chi_{n}(\sqrt{n\rho})$.

Again, the radial symmetry implies that $\mathbf{Y}$ is statistically invariant against rotation. Thus for a given $\norm{\mathbf{Y}}=r$, vector $\mathbf{Y}$ is uniformly distributed on the $n$-dimensional hyper-sphere of surface $\mathcal{S}(r) = \frac{2\pi^{n/2}}{\Gamma(n/2)} r^{n-1}$.

By a change of variables from Cartesian coordinates $\mathbf{y}$ to spherical coordinates $[r;\phi_1;...;\phi_{n-1}]$ for which the determinant of the Jacobian matrix is $r^{n-1} \prod_{j=1}^{n-2} \sin^{n-j-1} (\phi_j)$, the conditional PDF $p_{\mathbf{Y}|R=r}(\mathbf{y}|r)=\frac{1}{\mathcal{S}(r)} \delta(\norm{\mathbf{y}}=r)$ is obtained. 
Finally, the PDF of $\mathbf{Y}$ is obtained through $p_{\mathbf{Y}}(\mathbf{y}) = \int_r p_{\mathbf{Y}|R=r}(\mathbf{y}|r) p_R(r) \mathrm{d}r = \frac{1}{\mathcal{S}(\norm{\mathbf{y}})} p_{\norm{\mathbf{Y}}}(\norm{\mathbf{y}})$.
\end{proof}

The result of Theorem~\ref{Theorem:dist_real_Yd} is illustrated in \figurename~\ref{fig:visual_2_dim}.
\begin{figure}
	\centering
	\includegraphics[width=0.8\linewidth]{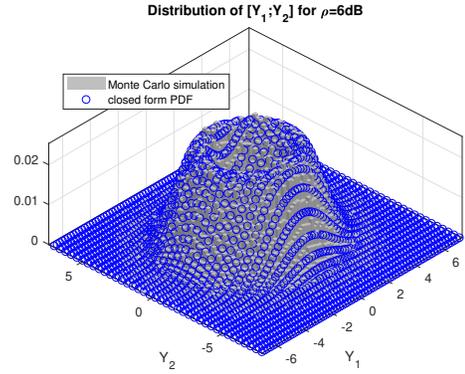}
	\caption{PDF of $\mathbf{Y} \in \mathbb{R}^2$ at $\rho=6\textrm{dB}$.}
	\label{fig:visual_2_dim}
\end{figure}

\subsection{Gaussian approximation of the PDF of $\mathbf{Y}$}
\label{sect:approx_KL_optim_real}
Although the PDF from Theorem~\ref{Theorem:dist_real_Yd} is numerically computable, it is still very cumbersome.
We propose hereafter a Gaussian approximation of $p_{\mathbf{Y}}(\mathbf{y})$.

\begin{theorem}[Gaussian approximation of $\mathbf{Y}$ for real space]
	\label{Theorem:approx_dist_real_Yd}
	Let $\mathbf{D} \in \mathbb{R}^n$ follow a uniform distribution on the surface of a sphere of radius $\sqrt{n\rho}$ and $\mathbf{W} \sim \mathcal{N}(\mathbf{0},\mathbf{I}_n)$. Then among all the centered Gaussian distribution, the one with covariance matrix ${\Sigma}=(1+\rho) \mathbf{I}_n$ minimizes its Kullback-Leibler divergence to the distribution of $\mathbf{Y} = \mathbf{D}+\mathbf{W}$.
\end{theorem}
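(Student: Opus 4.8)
The plan is to exploit the fact that, although the density $p_{\mathbf{Y}}$ from Theorem~\ref{Theorem:dist_real_Yd} is unwieldy, the Kullback--Leibler objective depends on it only through its second-order moments. Writing the candidate family as the centered Gaussians $q_{\Sigma}(\mathbf{y}) = (2\pi)^{-n/2}\abs{\Sigma}^{-1/2}\exp(-\tfrac12 \mathbf{y}^T \Sigma^{-1}\mathbf{y})$ with $\Sigma \succ \mathbf{0}$, I would first expand the relevant divergence $D_{\mathrm{KL}}(p_{\mathbf{Y}}\,\|\,q_{\Sigma}) = \int p_{\mathbf{Y}}\log p_{\mathbf{Y}} - \int p_{\mathbf{Y}}\log q_{\Sigma}$. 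The first term is the (negative) differential entropy of $\mathbf{Y}$ and does not involve $\Sigma$, so minimizing the divergence reduces to maximizing the cross-entropy term $\int p_{\mathbf{Y}}\log q_{\Sigma}$.

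Since $\log q_{\Sigma}(\mathbf{y})$ is affine in the quadratic form $\mathbf{y}^T\Sigma^{-1}\mathbf{y}$, this cross term depends on $p_{\mathbf{Y}}$ only through $\mathbb{E}[\mathbf{Y}] = \mathbf{0}$ and the second-moment matrix $\Sigma_{\mathbf{Y}} \triangleq \mathbb{E}[\mathbf{Y}\mathbf{Y}^T]$. Concretely I would show that $D_{\mathrm{KL}}(p_{\mathbf{Y}}\,\|\,q_{\Sigma}) = \kappa + \tfrac12\log\abs{\Sigma} + \tfrac12\operatorname{tr}(\Sigma^{-1}\Sigma_{\mathbf{Y}})$, where $\kappa$ collects all the $\Sigma$-independent constants. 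The key point, and the step that makes the cumbersome PDF irrelevant, is precisely this collapse of the objective onto the pair $(\mathbb{E}[\mathbf{Y}],\Sigma_{\mathbf{Y}})$.

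Next I would minimize $h(\Sigma) \triangleq \tfrac12\log\abs{\Sigma} + \tfrac12\operatorname{tr}(\Sigma^{-1}\Sigma_{\mathbf{Y}})$ over the cone of positive-definite matrices. Setting the matrix gradient to zero, $\tfrac12\Sigma^{-1} - \tfrac12\Sigma^{-1}\Sigma_{\mathbf{Y}}\Sigma^{-1} = \mathbf{0}$, yields the unique stationary point $\Sigma = \Sigma_{\mathbf{Y}}$; globality follows because $h(\Sigma)-h(\Sigma_{\mathbf{Y}})$ is itself equal to $D_{\mathrm{KL}}\!\left(\mathcal{N}(\mathbf{0},\Sigma_{\mathbf{Y}})\,\|\,\mathcal{N}(\mathbf{0},\Sigma)\right)\ge 0$, with equality if and only if $\Sigma = \Sigma_{\mathbf{Y}}$. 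Thus the optimal covariance is exactly $\Sigma_{\mathbf{Y}}$, i.e. the best centered Gaussian is the moment-matched one.

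Finally, I would compute $\Sigma_{\mathbf{Y}}$. Independence of $\mathbf{D}$ and $\mathbf{W}$ together with $\mathbb{E}[\mathbf{D}]=\mathbb{E}[\mathbf{W}]=\mathbf{0}$ gives $\Sigma_{\mathbf{Y}} = \mathbb{E}[\mathbf{D}\mathbf{D}^T] + \mathbf{I}_n$. The rotational invariance of the uniform distribution on the sphere forces $\mathbb{E}[\mathbf{D}\mathbf{D}^T] = c\,\mathbf{I}_n$: off-diagonal entries vanish under coordinate reflections and the diagonal entries are all equal under coordinate permutations. Taking the trace and using $\norm{\mathbf{D}}^2 = n\rho$ gives $nc = n\rho$, hence $c=\rho$ and $\Sigma_{\mathbf{Y}} = (1+\rho)\mathbf{I}_n$, which is the claim. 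I expect no single hard obstacle here: the only conceptual step is recognizing the moment-matching reduction, after which the remaining work is the standard positive-definite optimization and an elementary symmetry computation.
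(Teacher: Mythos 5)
Your proposal is correct and follows essentially the same route as the paper: both reduce the KL objective to the $\Sigma$-dependent part $\tfrac12\log\det\Sigma + \tfrac12\operatorname{tr}\left(\Sigma^{-1}\mathbb{E}[\mathbf{Y}\mathbf{Y}^T]\right)$, identify the minimizer as the moment-matched covariance $\mathbb{E}[\mathbf{Y}\mathbf{Y}^T]$, and invoke the symmetry of the spherical distribution together with $\norm{\mathbf{D}}^2 = n\rho$ to conclude $\Sigma^* = (1+\rho)\mathbf{I}_n$. Your write-up is marginally tighter in two spots: you certify \emph{global} optimality by recognizing $h(\Sigma)-h(\Sigma_{\mathbf{Y}})$ as a Gaussian-to-Gaussian KL divergence (the paper stops at first-order stationarity of the matrix derivative), and you get the scalar $\rho$ directly from the deterministic norm of $\mathbf{D}$ rather than via the non-central $\chi^2$ moment $\mathbb{E}\left[\norm{\mathbf{Y}}^2\right]=n+n\rho$ used in the paper.
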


\begin{proof}[Proof]
 Let $\mathbf{Q}\sim \mathcal{N}(\mathbf{0},{\Sigma})$ denote a zero-mean Gaussian random vector with a $n\times n$ covariance matrix ${\Sigma}$. The PDF of $\mathbf{Q}$ is given by
\begin{align*}
p_{\mathbf{Q}}(\mathbf{q}) = \frac{1}{(2\pi)^{n/2} \sqrt{\det({\Sigma})}} e^{-\frac{1}{2}\mathbf{q}^T {\Sigma}^{-1} \mathbf{q}}.
\end{align*}
The covariance matrix ${\Sigma}$ can be obtained by minimizing the Kullback-Leibler (KL) divergence between the distributions represented by $p_{\mathbf{Q}}$ and $p_{\mathbf{Y}}$:
\begin{align}
\label{eq:KL_real_q_correlated}
\begin{split}
&D(p_{\mathbf{Y}}||p_{\mathbf{Q}}) = \mathbb{E} \left[ \log(\frac{p_{\mathbf{Y}}(\mathbf{Y})}{p_{\mathbf{Q}}(\mathbf{Y})}) \right]
= \log\Gamma(n/2)\\
&+ \left(\frac{n}{2} - 1 \right) \log 2  + \left(\frac{1}{2} - \frac{n}{4}\right) \log(n\rho) - \frac{1}{2} n\rho\\
&+ \left(1-\frac{n}{2}\right) \mathbb{E} \left[ \log \norm{\mathbf{Y}} \right]
+ \mathbb{E} \left[ \log \mathcal{I}_{n/2-1}(\norm{\mathbf{Y}}\sqrt{n\rho}) \right]\\
&+ \frac{1}{2}\log \det({\Sigma}) + \frac{1}{2} \mathbb{E} \left[ \mathbf{Y}^T ({\Sigma}^{-1} - \mathbf{I}_n) \mathbf{Y} \right].
\end{split}
\end{align}

Let ${\Sigma}^*$ denote the optimal covariance matrix given by the minimization of~\eqref{eq:KL_real_q_correlated},
${\Sigma}^* \triangleq \underset{{\Sigma}}{\textrm{argmin }} D(p_{\mathbf{Y}}||p_{\mathbf{Q}}) = \underset{{\Sigma}}{\textrm{argmin }} \log \det \Sigma + \mathbb{E} \left[ \mathbf{Y}^T {\Sigma}^{-1} \mathbf{Y} \right]$.
Taking the derivative of $D(p_{\mathbf{Y}}||p_{\mathbf{Q}})$ with respect to $\tilde{{\Sigma}} \triangleq {\Sigma}^{-1}$ yields:
\begin{align}
\begin{split}
\frac{\partial D(p_{\mathbf{Y}}||p_{\mathbf{Q}})}{\partial \tilde{{\Sigma}}} &= 
\frac{\partial}{\partial \tilde{{\Sigma}}} \left(  \log \det \Sigma + \mathbb{E} \left[ \mathbf{Y}^T {\Sigma}^{-1} \mathbf{Y} \right] \right)\\
&= \frac{\partial}{\partial \tilde{{\Sigma}}} \left(  -\log \det \tilde{{\Sigma}} + \mathbb{E} \left[ \mathbf{Y}^T \tilde{{\Sigma}} \mathbf{Y} \right] \right).
\end{split}
\end{align}
Thanks to the Jacobi formula of matrix calculus which expresses the derivative of the determinant we obtain $\frac{\partial}{\partial \tilde{{\Sigma}}} \log \det \tilde{{\Sigma}} = {\Sigma}^T$. Also, $\frac{\partial}{\partial \tilde{{\Sigma}}} \mathbb{E} \left[ \mathbf{Y}^T \tilde{{\Sigma}} \mathbf{Y} \right] = \mathbb{E} \left[ \mathbf{Y} \mathbf{Y}^T \right]$. Hence the solution of $\frac{\partial D(p_{\mathbf{Y}}||p_{\mathbf{Q}})}{\partial \tilde{{\Sigma}}} = 0$ is given by:
\begin{align}
\label{eq:sigma_opt}
{\Sigma}^* = \mathbb{E} \left[ \mathbf{Y} \mathbf{Y}^T \right].
\end{align}

We recall that $\mathbf{Y} = \mathbf{D} + \mathbf{W}$ where the Cartesian coordinates of Gaussian noise $\mathbf{W}$ are identically distributed uncorrelated and have zero mean by definition. The uniform spherical distribution of $\mathbf{D}$ implies the same properties because two $n$-dimensional random vectors $[D_1;...;+D_l;...;D_n]$ and $[D_1;...;-D_l;...;D_n]$ follow the same distributions hence they must have the same mean, i.e. $\mathbb{E}[D_l] = \mathbb{E}[-D_l]$, and the same covariance, i.e. $\mathbb{E}[D_l D_m] = \mathbb{E}[-D_l D_m]$, $\forall m \neq l$.
Therefore the Cartesian coordinates of $\mathbf{D}$ are also zero-mean identically distributed uncorrelated random variables. As a consequence, the Cartesian coordinates of $\mathbf{Y}$ share the same properties and~\eqref{eq:sigma_opt} becomes
${\Sigma}^* = \mathbb{E}\left[ \mathbf{Y} \mathbf{Y}^T \right] = \sigma^2 \mathbf{I}_n$
where $\sigma^2$ denotes the variance of Cartesian coordinates of $\mathbf{Y}$.

The value of $\sigma^2$ is derived as follows.
By using the non-central law $\chi_{n}^2(n\rho)$ of $\norm{\mathbf{Y}}^2$ (see proof of Theorem~\ref{Theorem:dist_real_Yd}), $\mathbb{E} \left[ \norm{\mathbf{Y}}^2 \right] = n+n\rho$. On the other hand, since the Cartesian coordinates of $\mathbf{Y}$ are identically distributed, we have $\mathbb{E} \left[ \norm{\mathbf{Y}}^2 \right] = n \sigma^2$. Hence $\sigma^2 = 1 + \rho$.
Finally, we conclude that among centered Normal distributions $\mathcal{N}(\mathbf{0},{\Sigma})$, the one minimizing the KL divergence is the i.i.d. Normal distribution $\mathcal{N}(\mathbf{0},(1+\rho) \mathbf{I}_n)$.
\end{proof}

The KL divergence between the distribution of $\mathbf{Y}$ and its Gaussian approximation is illustrated in \figurename~\ref{fig:KL_Qu_optim} for several SNR $\rho$, showing that it is almost independent of $n$. This means that the approximation is good even for small $n$.
Also, it is observed that the lower the SNR, the better the approximation. Nevertheless, for a wide range of SNR, the divergence remains small enough to enable a reliable approximation of the theoretical probability of synchronization error as detailed hereafter.
\begin{figure}
	\centering
	\includegraphics[width=0.8\linewidth]{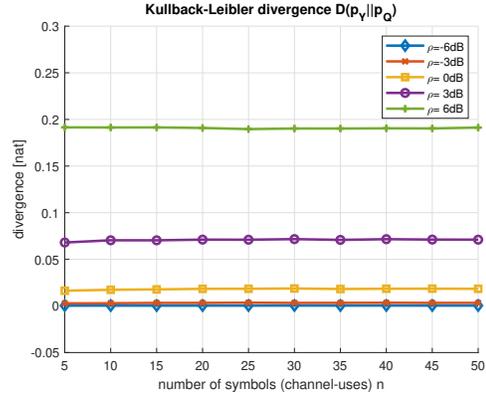}
	\caption{KL divergences $D(p_{\mathbf{Y}}||p_{\mathbf{Q}})$ for several SNR $\rho$ as a function of the number of symbols/channel-uses.}
	\label{fig:KL_Qu_optim}
\end{figure}

\subsection{Extension to Complex Signals}
\label{sect:complex_space_result}
To analyze the FS error probability of the system model in Section~\ref{sect:system_model}, we provide hereafter the complex space versions of Theorems~\ref{Theorem:dist_real_Yd} and~\ref{Theorem:approx_dist_real_Yd}.

\begin{theorem}[Distribution of $\mathbf{Y}$ for complex space and its Gaussian approximation]
	\label{Theorem:dist_cpx_Yd}
	Let $\mathbf{D} \in \mathbb{C}^n$ follow uniform distribution on the surface of sphere having radius $\sqrt{n\rho}$ and $\mathbf{W} \sim \mathcal{CN}(\mathbf{0},\mathbf{I}_n)$, then $\mathbf{Y} = \mathbf{D}+\mathbf{W}$ has PDF
 	\begin{align}
 	\begin{split}
 	p_{\mathbf{Y}}(\mathbf{y}) = &\Gamma(n)\pi^{-n} (n\rho)^{(1-n)/2} \norm{\mathbf{y}}^{1-n} e^{-(\norm{\mathbf{y}}^2 + n\rho)}\\
 	&\times \mathcal{I}_{n-1}(2\norm{\mathbf{y}}\sqrt{n\rho}).
 	\end{split}
 	\end{align}
Furthermore, among circular Normal distributions $Q=\mathcal{CN}({0},{\Sigma})$, the one minimizing the KL divergence $D(p_\mathbf{Y}||p_\mathbf{Q})$ is $\mathcal{CN}(\mathbf{0},(1+\rho) \mathbf{I}_n)$.
\end{theorem}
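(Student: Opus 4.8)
The plan is to obtain both assertions as complex counterparts of Theorems~\ref{Theorem:dist_real_Yd} and~\ref{Theorem:approx_dist_real_Yd}, exploiting the identification of $\mathbb{C}^n$ with $\mathbb{R}^{2n}$ by stacking real and imaginary parts. Under this identification, a vector uniform on the complex sphere of radius $\sqrt{n\rho}$ is uniform on the real $2n$-dimensional sphere of the same radius, while $\mathbf{W}\sim\mathcal{CN}(\mathbf{0},\mathbf{I}_n)$ becomes a real Gaussian with covariance $\frac{1}{2}\mathbf{I}_{2n}$, since unit power per complex symbol means variance $1/2$ per real coordinate. Both the radial-symmetry structure and the non-central chi-squared mechanism of the earlier proofs survive verbatim; the only genuinely new bookkeeping is the factor of two introduced by this normalization, and I expect that to be the main source of slips.

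For the PDF, I would first normalize the noise by passing to $\tilde{\mathbf{Y}}=\sqrt{2}\,\mathbf{Y}$ in $\mathbb{R}^{2n}$, so that $\tilde{\mathbf{W}}=\sqrt{2}\,\mathbf{W}$ has unit per-coordinate variance and $\tilde{\mathbf{D}}=\sqrt{2}\,\mathbf{D}$ is uniform on the real sphere of radius $\sqrt{2n\rho}=\sqrt{(2n)\rho}$. Theorem~\ref{Theorem:dist_real_Yd} then applies directly in dimension $m=2n$ with parameter $\rho$, giving $p_{\tilde{\mathbf{Y}}}$ in closed form. Transforming back through $p_{\mathbf{Y}}(\mathbf{y})=2^{n}\,p_{\tilde{\mathbf{Y}}}(\sqrt{2}\,\mathbf{y})$ (the Jacobian of an isotropic scaling by $\sqrt{2}$ in $2n$ real dimensions is $2^{n}$) and using $\norm{\tilde{\mathbf{y}}}=\sqrt{2}\,\norm{\mathbf{y}}$, I expect the powers of two to cancel exactly: the exponent collapses to $-(\norm{\mathbf{y}}^2+n\rho)$, the Bessel argument to $2\norm{\mathbf{y}}\sqrt{n\rho}$ with order $n-1$, and the prefactor to $\Gamma(n)\pi^{-n}(n\rho)^{(1-n)/2}\norm{\mathbf{y}}^{1-n}$, which is precisely the claimed form. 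The routine part is verifying this cancellation; the care needed is purely in tracking the $\sqrt{2}$ factors.

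For the Gaussian approximation I would work directly in complex space, mirroring the proof of Theorem~\ref{Theorem:approx_dist_real_Yd}. Writing the circular density $p_{\mathbf{Q}}(\mathbf{q})=\pi^{-n}(\det(\Sigma))^{-1}\exp(-\mathbf{q}^H{\Sigma}^{-1}\mathbf{q})$, only the terms $\log\det(\Sigma)+\mathbb{E}[\mathbf{Y}^H{\Sigma}^{-1}\mathbf{Y}]$ depend on ${\Sigma}$, and differentiating with respect to ${\Sigma}^{-1}$ via the Jacobi formula yields the moment-matching optimum ${\Sigma}^*=\mathbb{E}[\mathbf{Y}\mathbf{Y}^H]$. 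I would then invoke the symmetry of the uniform complex-spherical law: invariance under coordinate permutations forces all $\mathbb{E}[|Y_l|^2]$ equal, while invariance under per-coordinate phase rotations $D_l\mapsto e^{j\theta}D_l$ forces every off-diagonal $\mathbb{E}[Y_l\overline{Y_m}]$, $l\neq m$, to vanish; together with the independence and properness of $\mathbf{W}$ this gives ${\Sigma}^*=\sigma^2\mathbf{I}_n$. Finally $\mathbb{E}[\norm{\mathbf{Y}}^2]=\mathbb{E}[\norm{\mathbf{D}}^2]+\mathbb{E}[\norm{\mathbf{W}}^2]=n\rho+n=n\sigma^2$ pins down $\sigma^2=1+\rho$, so the optimal circular Gaussian is $\mathcal{CN}(\mathbf{0},(1+\rho)\mathbf{I}_n)$. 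The only point requiring attention here is checking that the unconstrained moment-matching optimum is automatically isotropic, hence genuinely lies in the circular family over which the minimization is posed.
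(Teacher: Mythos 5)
Your proof is correct, and for the density it takes a genuinely different route from the paper's. The paper rederives Theorem~\ref{Theorem:dist_real_Yd} natively in complex coordinates: radial symmetry, the non-central chi-square law of $\norm{\mathbf{Y}}^2$, and the surface area $\frac{2\pi^{n}}{\Gamma(n)}r^{2n-1}$ of the complex hypersphere of radius $r$, i.e., it repeats the spherical-coordinates computation with complex power normalizations. You instead reduce to the already-proven real theorem by identifying $\mathbb{C}^n$ with $\mathbb{R}^{2n}$ and rescaling by $\sqrt{2}$ so that the noise has unit variance per real coordinate, then transforming the density back; your bookkeeping is right (the scale factors combine as $2^{n}\cdot 2^{-1}\cdot 2^{(1-n)/2}\cdot 2^{(1-n)/2}=1$), and the exponent, the Bessel order and argument, and the prefactor all land exactly on the stated formula. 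Your reduction buys reuse of Theorem~\ref{Theorem:dist_real_Yd} as a black box—no recomputation of Jacobians or sphere areas—at the cost of tracking the $\sqrt{2}$ factors, while the paper's native rederivation avoids the rescaling but redoes the computation. For the Gaussian-approximation part you follow essentially the paper's own reasoning (KL minimization gives moment matching ${\Sigma}^*=\mathbb{E}[\mathbf{Y}\mathbf{Y}^H]$, then symmetry and $\mathbb{E}[\norm{\mathbf{Y}}^2]=n(1+\rho)$ give $(1+\rho)\mathbf{I}_n$), with per-coordinate phase-rotation invariance as the natural complex replacement for the paper's sign-flip argument in Theorem~\ref{Theorem:approx_dist_real_Yd}. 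One small remark: your closing worry is unfounded—the minimization is posed over all circular Gaussians $\mathcal{CN}(\mathbf{0},\Sigma)$ with arbitrary Hermitian positive definite $\Sigma$, and $\mathbb{E}[\mathbf{Y}\mathbf{Y}^H]$ is automatically of that form, so the optimum lies in the family regardless; isotropy is needed only to evaluate it explicitly as $(1+\rho)\mathbf{I}_n$.
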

\begin{proof}[Proof]
	The proof is similar to the one of Theorem~\ref{Theorem:dist_real_Yd} with slight differences in power normalization, and with the surface of \emph{complex} hyper-sphere having radius $r$ given by $\frac{2\pi^{n}}{\Gamma(n)} r^{2n-1}$. The optimal Gaussian approximation is obtained by the same reasoning than in  Section~\ref{sect:approx_KL_optim_real}.
\end{proof}

\subsection{FS error probability}
\label{sect:FS_result}
One advantage of the proposed distribution approximation is that by approximating both $(\mathbf{D}_j + \mathbf{W}_j)$ and $(\mathbf{D}_{j+1} + \mathbf{W}_{j+1})$ by $\mathcal{CN}(\mathbf{0},(1+\rho)\mathbf{I}_n)$, $\mathbf{Y}_z = (\mathbf{D}_z + \mathbf{W}_z)$ can also be approximated by $\mathcal{CN}(\mathbf{0},(1+\rho)\mathbf{I}_n)$. 

If the frame starts at position $\mu$ of buffer $\mathbf{Z}$, the correlation metric yields $f(\mathbf{Z},\mu) = \norm{\mathbf{s}}^2 + \Re{\mathbf{s}_\mu^H(\mathbf{D}_z + \mathbf{W}_z)}$ and $f(\mathbf{Z},\tau) = \Re{\mathbf{s}_\tau^H \mathbf{s}_\mu} + \Re{\mathbf{s}_\tau^H(\mathbf{D}_z + \mathbf{W}_z)}$.
Because of the Gaussian approximation of $\mathbf{Y}_z$, we obtain that random variable $(f(\mathbf{Z},\mu) - f_C(\mathbf{Z},\tau))$ can be well approximated by $\mathcal{CN}(\norm{\mathbf{s}}^2-\Re{\mathbf{s}_\tau^H \mathbf{s}_\mu},\frac{1+\rho}{2} \norm{\mathbf{s}_\mu - \mathbf{s}_\tau}^2)$.
Finally, 
\begin{align}
\label{eq:Pe_tau}
\begin{split}
\mathcal{P}_e(\tau) \!=\! \textrm{Pr} \left\lbrace f(\mathbf{Z},\mu) \!\!\leq\!\! f(\mathbf{Z},\tau) \right\rbrace
\!\approx\! \mathcal{Q} \left( \frac{\norm{\mathbf{s}}^2\!-\!\Re{\mathbf{s}_\tau^H \mathbf{s}_\mu}}{\sqrt{\frac{1+\rho}{2} \norm{\mathbf{s}_\mu \!-\! \mathbf{s}_\tau}^2}} \right).
\end{split}
\end{align}

For a given frame length $n$ and FS power allocation $\alpha$, the upper bound $P_{f,u}$ on frame error rate can be calculated by combining~\eqref{eq:Pe_tau} and~\eqref{eq:epsilon_D} into~\eqref{eq:FBL_optim_Pf}:
\begin{align}
\label{eq:Pfu}
\begin{split}
&P_{f,u} \approx 1 - 
\left( 1-\sum_{\tau \neq \mu} \mathcal{Q} \left( \frac{\norm{\mathbf{s}}^2-\Re{\mathbf{s}_\tau^H \mathbf{s}_\mu}}{\sqrt{\frac{1+\rho}{2} \norm{\mathbf{s}_\mu - \mathbf{s}_\tau}^2}} \right) \right)\\
& \times \left( 1-\mathcal{Q}\left(\frac{nC(\rho) - k+\frac{1}{2}\log_2(2n)}{\sqrt{nV(\rho)}}\right) \right).
\end{split}
\end{align}

Note that in~\eqref{eq:Pfu}, the overhead $\alpha$ is not explicit because $P_{f,u}$ depends on the specific structure of the SW $\mathbf{s}$. We can bring out $\alpha$ by using particular sequences that have ideal auto-correlation property $\Re{\mathbf{s}_\tau^H \mathbf{s}_\mu} = \delta(\tau-\mu)$ (e.g. Zadoff-Chu sequences~\cite{zepernick2013pseudo,3gpp362112017}). Then~\eqref{eq:Pe_tau} simplifies into
$\mathcal{P}_e(\tau) \approx \mathcal{Q}\left(\frac{\norm{\mathbf{s}}}{\sqrt{1+\rho}}\right) = \mathcal{Q}\left(\sqrt{n \frac{\rho_s}{1+\rho}}\right)$
and~\eqref{eq:Pfu} becomes
\begin{align}
\label{eq:Pfu_ZC}
\begin{split}
&P_{f,u} \approx 1 - 
\left( 1-(n-1) \mathcal{Q}\left(\sqrt{n \frac{\alpha \rho_{\mbox{\scriptsize tot}} }{1+(1-\alpha)\rho_{\mbox{\scriptsize tot}} }}\right)\right)\\
& \times \left( 1-\mathcal{Q}\left(\frac{nC((1-\alpha)\rho_{\mbox{\scriptsize tot}}) - k+\frac{1}{2}\log_2(2n)}{\sqrt{nV((1-\alpha)\rho_{\mbox{\scriptsize tot}})}}\right) \right).
\end{split}
\end{align}
\section{Numerical evaluation}
\label{sect:numerical_eval}
The upper bound on the probability of erroneous synchronization~\eqref{eq:union_bound} is evaluated based on ~\eqref{eq:Pe_tau} 
 and compared to Monte Carlo simulations. Zadoff-Chu sequences of root 1 are used as SW $\mathbf{s}$. The comparison is illustrated in~\figurename{}~\ref{fig:fs_Q_optim_2frame} for frame with short lengths. Despite the Gaussian assumption and the union bound approximation, the proposed upper bounds are tight compared to their corresponding Monte-Carlo simulations, even at high SNR.
It is worth emphasizing that while Monte-Carlo simulations remain manageable at SNRs such that $P_e > 10^{-8}$, the proposed theoretical approximation allows much faster evaluations for every SNR.

Thanks to the distribution approximation of the erroneous FS probability,
 the overhead ratio $\alpha$ that characterizes the fraction of total power allocated to the SW can be optimized. In \figurename{}~\ref{fig:fs_ed_32bits}, the optimization of the frame error rate under both synchronization errors and decoding errors is illustrated for several transmitted average symbol energy $\rho_{\mbox{\scriptsize tot}}$ for a short frame of $n=63$ symbols \nouveautxt{(the frame length is constrained by Zadoff-Chu sequences and should therefore be odd).}
 and $k=32$ information bits.
 The small mismatches at low power ratio and low SNR ($\rho_{\mbox{\scriptsize tot}}=0\textrm{dB}$ and $1\textrm{dB}$) are due to the fact that less power allocated to SW with respect to the noise power results in an increase of the FS error probability for which the union bound becomes useless (i.e. $P_{e,u}$ exceeds $1$). However, the optimum overheads obtained by simulation and by the theoretical expression remain coherent.
 
As illustrated in \figurename{}~\ref{fig:fs_ed_32bits}, above $6\textrm{dB}$, the Monte-Carlo simulations become \emph{extremely costly} due to tiny frame error rates around $10^{-10}$. This demonstrates the practical interest of this theoretical model as the optimal frame structure can be found very fast (minutes compared to days of Monte-Carlo simulations) by evaluating the proposed theoretical bounds, which are close to the Monte-Carlo simulations.

Finally the performance obtained with a practical setup using QPSK modulation combined with the 3GPP 5G NR Downlink Polar channel code~\cite{3gpp382122018} is illustrated in \figurename{}~\ref{fig:fs_polar_32bits}.
The channel code decoder is the CRC-Aided Successive Cancellation List Decoder with list size 32.
This 3GPP Polar code is powerful for short packets and includes a 24-bit CRC that helps the list-based channel decoder. The performance gap between the proposed theoretical bounds and the practical setup partly comes from the sub-optimal decoder (a larger list is required to reach ML performance), and also from the QPSK modulation which departs from the spherical uniform distribution assumption. 
 However the main purpose of this comparison is to underline that the optimal overheads coincide: the proposed theoretical bounds help finding the optimal overheads very fast thus avoiding time-consuming simulations.


\begin{figure}
	\centering
	\includegraphics[width=\linewidth]{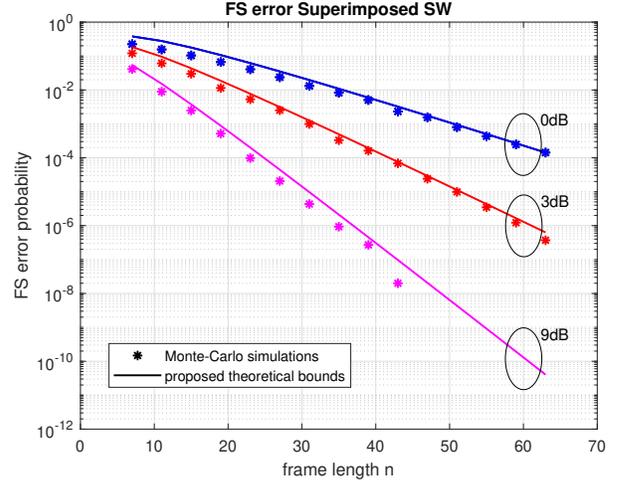}
	\caption{FS error probability and its approximated union bounds for $\rho_{\mbox{\scriptsize tot}}=0\textrm{dB},3\textrm{dB} \textrm{ and } 9\textrm{dB}$ for short frames. Equal power allocation for SW and data ($\alpha = 0.5$).}
	\label{fig:fs_Q_optim_2frame}
\end{figure}
\begin{figure}
	\centering
	\includegraphics[width=\linewidth]{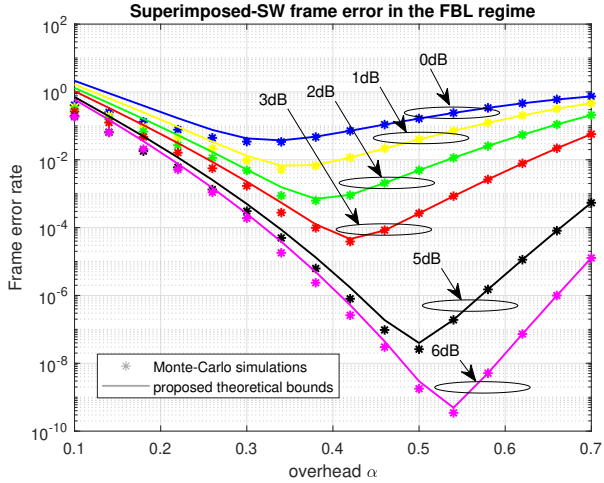}
	\caption{Trade-off between FS and channel-coding. Zadoff-Chu sequences of root 1 are used as SW. Frame length $n=63$ and the number of information bits $k=32$. The total power $\rho_{\mbox{\scriptsize tot}}$ varies from $0\textrm{dB}$ to $6\textrm{dB}$.}
	\label{fig:fs_ed_32bits}
\end{figure}
\begin{figure}
	\centering
    \includegraphics[width=\linewidth]{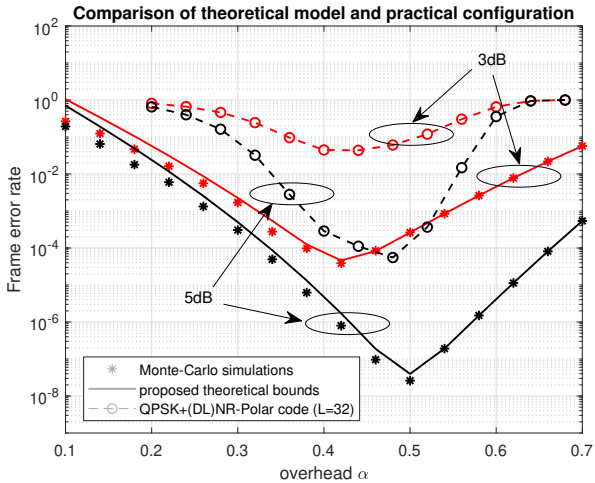}
	\caption{Optimal frame structure comparison with QPSK and 3GPP 5G NR Downlink Polar code. Zadoff-Chu sequences of root 1 are used as SW. Frame length $n=63$ symbols and $k=32$ information bits. The total power $\rho_{\mbox{\scriptsize tot}} = 3\textrm{dB}$ and $5\textrm{dB}$.}
	\label{fig:fs_polar_32bits}
\end{figure}

\section{Conclusion}
\label{sect:conclusion}
In this paper, we have developed an upper bound on the frame synchronization error probability in the context of continuous transmissions over AWGN channel. A synchronization word is assumed to be superimposed to the data symbols instead of being embedded in a frame header. The advantage is that the synchronization word length is as long as the frame itself. The proposed error probability is used to optimize the power overhead between the synchronization word and the data symbols for a given total transmitted average symbol energy using recent results on the finite blocklength coding performances. Comparisons with Monte-Carlo simulations of the theoretical scheme show that our results are tight enough to find the optimal power overhead. Furthermore, numerical evaluations show that the optimal overheads obtained with this theoretical model coincide with the practical 3GPP 5G NR Downlink scheme. The comparison between using a superimposed synchronization word and concatenating it to data symbols is part of ongoing work. Further perspectives also include an extension to mobile fading channels.

\bibliographystyle{IEEEtran}
\bibliography{./isct2018_5pages}
\end{document}